\documentclass[11pt]{article}

\usepackage{amsmath,amssymb,amsthm}
\usepackage{booktabs}
\usepackage{hyperref}
\usepackage[margin=1in]{geometry}

\newtheorem{theorem}{Theorem}

\title{A Simple Constructive Bound on Circuit Size Change\\Under Truth Table Perturbation}
\author{Kirill Krinkin\\
Neapolis University Pafos\\
\texttt{kirill@krinkin.com}}
\date{}

\begin{document}
\maketitle

\begin{abstract}
The observation that optimum circuit size changes by at most $O(n)$ under a
one-point truth table perturbation is implicit in prior work on the Minimum
Circuit Size Problem. This note states the bound explicitly for arbitrary fixed
finite complete bases with unit-cost gates, extends it to general Hamming
distance via a telescoping argument, and verifies it exhaustively at $n = 4$ in
the AIG basis using SAT-derived exact circuit sizes for 220 of 222 NPN
equivalence classes. Among 987 mutation edges, the maximum observed difference
is $4 = n$, confirming the bound is tight at $n = 4$ for AIG.
\end{abstract}

\section{Introduction}

The observation that optimum circuit size changes by at most $O(n)$ under a
one-point perturbation is already implicit in work on the Minimum Circuit Size
Problem (MCSP), where it serves as a bounded-differences ingredient in
concentration arguments~\cite{golovnev2019mcsp}. This note isolates the
phenomenon as a standalone statement for arbitrary fixed finite complete bases,
formulates the resulting Hamming-distance bound explicitly, and complements it
with exact small-scale evidence in the AIG model.

The optimal circuit size $\mathrm{opt}(f, B)$ of a Boolean function $f$ over a
complete gate basis $B$ is a fundamental measure of computational complexity.
A natural question arises: how sensitive is this measure to small changes in the
function?

We observe that changing a single bit of a Boolean function's truth table
changes the optimal circuit size by at most $O(n)$ gates, where $n$ is the
number of variables. The bound is constructive: we exhibit the modified circuit
explicitly. We verify the bound exhaustively for $n = 4$ in the AIG basis using
SAT-verified exact circuit sizes, and show that the bound is tight at $n = 4$ in the AIG basis.
The exhaustive verification is restricted to $n = 4$ and one circuit model; it
should be read as validation of the construction and local tightness, not as
asymptotic evidence.

\section{Preliminaries}

A \emph{Boolean function} of $n$ variables is a map $f \colon \{0,1\}^n \to
\{0,1\}$. Its \emph{truth table} $\mathrm{tt}(f)$ is the binary string of
length $2^n$ listing the output values in lexicographic order of inputs.

The \emph{Hamming distance} $d_H(\mathrm{tt}(f), \mathrm{tt}(f'))$ between two
truth tables is the number of positions at which they disagree.

A \emph{circuit} over a gate basis $B$ is a directed acyclic graph whose
internal nodes are labelled by gates from $B$ and whose sources are the input
variables $x_0, \ldots, x_{n-1}$. The \emph{size} of a circuit is the number
of internal nodes (gates). The \emph{optimal circuit size} $\mathrm{opt}(f, B)$
is the minimum size over all circuits in basis $B$ that compute
$f$~\cite{jukna2012boolean}.

A basis $B$ is \emph{complete} if every Boolean function can be computed by a
circuit over $B$. Throughout this note, we assume $B$ is finite and each gate
has unit cost. We consider two concrete bases:
\begin{itemize}
  \item \textbf{AIG basis:} $\{\mathrm{AND}(x,y), \neg x\}$ with free
    inversions---every gate output is available in both polarities at no cost.
    Circuit size counts only AND gates.
  \item \textbf{AON basis:} $\{\mathrm{AND}(x,y), \mathrm{OR}(x,y), \neg x\}$.
    Circuit size counts AND, OR, and NOT gates.
\end{itemize}

\section{Main Result}

\begin{theorem}\label{thm:lipschitz}
For any fixed finite complete gate basis $B$ with unit-cost gates and any two Boolean functions $f, f'$ of $n$
variables,
\[
  |\mathrm{opt}(f, B) - \mathrm{opt}(f', B)| \;\le\; c_B \cdot n \cdot
  d_H(\mathrm{tt}(f), \mathrm{tt}(f')),
\]
where $c_B$ is a constant depending only on $B$.
\end{theorem}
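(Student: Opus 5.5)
We first prove the one-point case and then telescope. Suppose $d_H(\mathrm{tt}(f),\mathrm{tt}(f'))=1$, so $f' = f \oplus \delta_a$, where $a\in\{0,1\}^n$ is the unique input on which they differ and $\delta_a$ is the point indicator $\delta_a(x)=1 \iff x=a$. Take an optimal circuit $C$ for $f$ over $B$ with output gate $g$. We build a circuit for $f'$ by appending a fixed gadget: a subcircuit computing $\delta_a$ together with one XOR of $g$ and $\delta_a$. This shows $\mathrm{opt}(f',B)\le \mathrm{opt}(f,B)+c_B n$. Since $f=f'\oplus\delta_a$, the same argument with the roles swapped gives the reverse inequality, and hence $|\mathrm{opt}(f,B)-\mathrm{opt}(f',B)|\le c_B n$.

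The key step is to bound the cost of the gadget uniformly in $B$. Since $B$ is finite and complete, each of the finitely many functions $\neg x$, $\mathrm{AND}(x,y)$ and $\mathrm{XOR}(x,y)$ has some fixed circuit over $B$. Let $k_B$ be the maximum size of these three circuits; it depends only on $B$. A small subtlety is that these circuits may need constants; if the basis cannot produce constants for free, we instead realize them as functions of one input variable, which changes nothing, since any such fixed-size simulation costs $O(1)$ gates.

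With these pieces, we write $\delta_a = \bigwedge_{i} \ell_i$, where $\ell_i = x_i$ if $a_i=1$ and $\ell_i=\neg x_i$ otherwise. This uses at most $n$ negations and $n-1$ binary ANDs, each simulated at cost at most $k_B$. Adding one simulated XOR, the gadget has at most $(2n)k_B \le 2k_B\, n$ gates for $n\ge1$. So $c_B = 2k_B$ works. The case $n=0$ has no perturbation issue beyond a constant, and we can absorb it or assume $n\ge 1$.

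For general Hamming distance $d$, we choose a path $f=f_0,f_1,\dots,f_d=f'$ in which consecutive truth tables differ in exactly one position, flipping the disagreeing bits one at a time. Applying the one-point bound to each step and using the triangle inequality,
\[
|\mathrm{opt}(f,B)-\mathrm{opt}(f',B)| \le \sum_{j=1}^{d} |\mathrm{opt}(f_{j-1},B)-\mathrm{opt}(f_j,B)| \le c_B\, n\, d.
\]

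The only real obstacle is the basis-independence bookkeeping: making sure that the simulations of NOT, AND and XOR over an arbitrary complete $B$ have constant size and need no free constants. Completeness guarantees that some circuit exists for each of these three fixed two-variable functions, and finiteness of that list gives the uniform constant $k_B$. Everything else is routine.
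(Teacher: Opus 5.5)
Your proof is correct and follows the paper's argument: an equality detector built as an AND of $n$ literals, one correction gate appended to an optimal circuit (each operation simulated in $B$ at constant cost), and then telescoping along a path of one-bit flips. The only difference is that you correct with $\mathrm{XOR}$, which makes the two directions literally symmetric. The paper instead uses $f' = f \lor \mathrm{eq}$ and $f = f' \land \neg\mathrm{eq}$, which avoids simulating XOR and costs exactly $n$ gates in the AIG basis ($c_B = 1$, matching the tightness example), whereas your gadget would cost $n+2$ there.
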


\begin{proof}
It suffices to prove the case $d_H = 1$. For general $d_H$, let
$f = f_0, f_1, \ldots, f_{d_H} = f'$ be a sequence of functions where each
consecutive pair differs in exactly one truth table bit. Applying the $d_H = 1$
bound at each step and telescoping:
\[
  |\mathrm{opt}(f, B) - \mathrm{opt}(f', B)| \;\le\;
  \sum_{i=1}^{d_H} |\mathrm{opt}(f_{i-1}, B) - \mathrm{opt}(f_i, B)|
  \;\le\; c_B \cdot n \cdot d_H.
\]

Suppose $f$ and $f'$ differ only at input $x^* \in \{0,1\}^n$, with
$f(x^*) = 0$ and $f'(x^*) = 1$. (The case $f(x^*) = 1$ is symmetric by
swapping $f$ and $f'$.)

\medskip
\noindent\textbf{Upper bound on $\mathrm{opt}(f')$.}
Given an optimal circuit $S$ for $f$ of size $\mathrm{opt}(f)$, construct a
circuit for $f'$ as follows.

\emph{Step 1: Equality detector.} Build a subcircuit
$\mathrm{eq}(x, x^*)$ that outputs 1 if and only if $x = x^*$. This requires
checking $x_i = x^*_i$ for each $i \in \{0, \ldots, n-1\}$ and AND-ing the
results. In the AIG basis, each comparison is a literal (free inversion if
$x^*_i = 0$), so the AND chain costs $n - 1$ gates. In a general complete
basis $B$, implementing AND and NOT costs $c_B$ gates each, giving a detector
of size at most $c_B \cdot (n - 1)$. (For any fixed finite complete basis $B$, there exist constants $a_B, b_B > 0$
such that the repair gadget---detector plus output correction---has size at most
$a_B n + b_B$ gates. For $n \ge 1$ this is at most $c_B \cdot n$ where
$c_B := a_B + b_B$.)

\emph{Step 2: Output correction.} Compute
\[
  f'(x) = f(x) \lor \mathrm{eq}(x, x^*).
\]
In the AIG basis: $f'(x) = \neg(\neg f(x) \land \neg\mathrm{eq}(x, x^*))$,
costing 1 AND gate (inversions free). In a general basis, the OR costs at most
$c_B$ gates.

\emph{Total:} $\mathrm{opt}(f') \le \mathrm{opt}(f) + c_B \cdot n$.

\medskip
\noindent\textbf{Upper bound on $\mathrm{opt}(f)$.}
Given an optimal circuit $S'$ for $f'$, construct:
\[
  f(x) = f'(x) \land \neg\mathrm{eq}(x, x^*).
\]
This is correct: when $x \ne x^*$, $\mathrm{eq} = 0$ and $f(x) = f'(x)$; when
$x = x^*$, $f(x) = f'(x^*) \land 0 = 0 = f(x^*)$. The detector is reused;
the AND gate costs 1 gate in AIG (inversion of $\mathrm{eq}$ is free) or
$c_B$ gates in general.

\emph{Total:} $\mathrm{opt}(f) \le \mathrm{opt}(f') + c_B \cdot n$.

\medskip
Combining both inequalities: $|\mathrm{opt}(f) - \mathrm{opt}(f')| \le
c_B \cdot n$.
\end{proof}

\medskip
\noindent\textbf{Remark.} For the AIG basis, the detector costs $n - 1$ AND
gates (inversions are free) and the output correction costs 1 gate, giving
$c_B = 1$ and a bound of $n \cdot d_H$. For a general basis $B$, the constant
$c_B$ depends on the cost of simulating AND and NOT in $B$.

\section{Tightness}

The bound of Theorem~\ref{thm:lipschitz} is achieved at $n = 4$ in the AIG
basis. NPN class \texttt{0x0001} (the single-minterm function, one AND of all
inputs) has $\mathrm{opt}_{\mathrm{AIG}} = 3$, while NPN class \texttt{0x0180}
has $\mathrm{opt}_{\mathrm{AIG}} = 7$. These classes are connected by a
single-bit truth table mutation, giving
$|\mathrm{opt}(f) - \mathrm{opt}(f')| = 4 = n$.

An open question is whether there exists a constant $C$ independent of $n$ such
that $|\mathrm{opt}(f) - \mathrm{opt}(f')| \le C$ for $d_H = 1$. Empirical
evidence at $n = 4$ suggests the typical perturbation is much smaller than the
worst case: 94.7\% of mutation edges have $|\Delta\mathrm{opt}| \le 2$, and the
mean absolute difference is 1.03. No asymptotic lower-bound family is known;
whether $\Omega(n)$ one-bit gaps exist for infinitely many $n$ remains open.

\section{Computational Verification}

We verified the bound exhaustively at $n = 4$ for the AIG basis.
Optimal circuit sizes for all 222 NPN equivalence classes of 4-variable Boolean
functions were computed: 220 of 222 classes have SAT-verified exact values:
136 by exhaustive enumeration~\cite{krinkin2026unitgap}, 84 by SAT-based exact
synthesis~\cite{kojevnikov2009circuit} with the cube-and-conquer
strategy~\cite{heule2012cube}. The remaining 2 classes have improved upper
bounds (SAT timeout at $k = 9$, confirmed SAT at $k = 10$).

The \emph{mutation graph} connects two NPN classes by an edge whenever some
function in one class differs from some function in the other by a single truth
table bit. Among the 987 mutation edges where both endpoints have exact
$\mathrm{opt}_{\mathrm{AIG}}$ values, the maximum observed difference is
$4 = n$, confirming the bound is tight at $n = 4$ for the AIG basis.

\begin{table}[h]
\centering
\begin{tabular}{crr}
\toprule
$|\Delta\mathrm{opt}_{\mathrm{AIG}}|$ & Edges & \% \\
\midrule
0 & 300 & 30.4 \\
1 & 414 & 41.9 \\
2 & 221 & 22.4 \\
3 & 45 & 4.6 \\
4 & 7 & 0.7 \\
\midrule
Total & 987 & 100.0 \\
\bottomrule
\end{tabular}
\caption{Distribution of $|\Delta\mathrm{opt}_{\mathrm{AIG}}|$ across mutation
  edges between NPN-4 classes with exact optimal sizes.}
\label{tab:distribution}
\end{table}

The seven edges achieving $|\Delta\mathrm{opt}| = n = 4$ all connect a
function of optimal size 3 to one of optimal size 7.

\section{Discussion}

The bound implies that nearby functions (in Hamming distance) have provably
nearby circuit complexity.

\textbf{Complexity estimation.} As a theoretical corollary, the bound gives a
worst-case transfer inequality: $\mathrm{opt}(f') \in [\mathrm{opt}(f) -
c_B n,\; \mathrm{opt}(f) + c_B n]$ for any Hamming neighbour $f'$ of $f$.

\textbf{Empirical tightness.} At $n = 4$, the mean absolute difference across
mutation edges is 1.03, far below the worst-case bound of $n = 4$. Whether
this gap persists at larger $n$ is an open question. Specifically: does there
exist an explicit family of function pairs $(f_n, f'_n)$ with
$d_H(\mathrm{tt}(f_n), \mathrm{tt}(f'_n)) = 1$ and
$|\mathrm{opt}(f_n) - \mathrm{opt}(f'_n)| = \Omega(n)$?

\textbf{Basis dependence.} The constant $c_B$ reflects the overhead of
implementing the equality detector and output correction in basis $B$. For
bases with free inversions (AIG), $c_B = 1$; for bases without, $c_B$ grows
with the cost of implementing AND and NOT. Whether the linear dependence on $n$
can be improved to sublinear for specific bases remains open.

\subsection*{Data availability}
The mutation graph, optimal circuit sizes for all 222 NPN-4 classes, and
verification scripts are available at
\url{https://github.com/krinkin/bounds} (tagged \texttt{v1.0} at time of
submission).

\bibliographystyle{plain}

\end{document}